\documentclass{article}
\usepackage[T1]{fontenc}
\usepackage[utf8]{inputenc}
\usepackage{lmodern}
\usepackage{dsfont}
\usepackage{amssymb}
\usepackage{amsmath}
\usepackage{amsthm}
\usepackage{amsfonts, float}
\usepackage{mathabx}
\usepackage{lmodern}
\usepackage{enumitem}
\usepackage{hyperref}
\usepackage{graphicx}
\usepackage{tabularx}
\usepackage{caption}
\usepackage{subcaption}
\usepackage[all]{xy}
\usepackage{etoolbox}
\usepackage{multirow}
\usepackage{authblk}
\usepackage{comment}

\usepackage{titlesec}

\usepackage{tikz}
\usetikzlibrary{quantikz2}

\usepackage[backend=biber, style=phys, biblabel=brakets, url=true]{biblatex}
\renewcommand{\thesection}{\Roman{section}}
\renewcommand{\thesubsection}{\Alph{subsection}}
\renewcommand{\thesubsubsection}{\arabic{subsubsection}}

\titleformat{\section}{\large\scshape}{\thesection.}{1em}{}
\titleformat{\subsection}{\normalsize\bfseries}{\thesubsection.}{1em}{}
\titleformat{\subsubsection}{\normalsize\bfseries}{\thesubsection.\thesubsubsection.}{1em}{}

\usepackage{hyperref}

\newcommand{\X}{\mathtt{X}} 
\newcommand{\Y}{\mathtt{Y}} 
\newcommand{\Z}{\mathtt{Z}}  
\newcommand{\I}{\mathtt{I}}

\DeclareMathOperator{\lcm}{lcm}

\newcommand{\HH}{\mathtt{H}}
\newcommand{\hh}{\mathcal{H}} 

\newcommand{\CX}{\mathtt{CX}} 
\newcommand{\CZ}{\mathtt{CZ}} 

\newcommand{\RY}{\mathtt{R_Y}}
\newcommand{\RZ}{\mathtt{R_Z}}
\newcommand{\RZZ}{\mathtt{R_{ZZ}}}

\newcommand{\U}{\mathtt{U}} 
\newcommand{\W}{\mathtt{W}}

\newcommand{\ord}{\operatorname{ord}}

\newcommand{\Ad}{\operatorname{Ad}}

\newcommand{\Span}{\operatorname{span}}

\newcommand{\GF}{\mathbb F}

\newcommand{\G}{\mathcal G}

\newtheorem{Definition}{Definition}

\newtheorem{Proposition}{Proposition}

\newcommand{\g}{\mathfrak{g}}

\hypersetup{
    colorlinks=true,
    linkcolor=blue
    }

\title{Dynamical Lie Algebras for Fixed-Gate Quantum Ans\"atze}
\author[1,2]{Orlane Zang}
\author[1]{Daniel Juteau}
\author[2]{Tony Quertier}
\author[2]{Grégoire Barrué}
\affil[1]{LAMFA, Université de Picardie Jules Verne, CNRS, France}
\affil[2]{Orange Research, France}
\date{}

\begin{document}

\maketitle

\begin{abstract}
Parameterized Quantum Circuits have emerged as a central framework for variational quantum algorithms in the NISQ regime, with major applications in quantum simulation, combinatorial optimization, and, more recently, quantum machine learning (QML). By developing a new approach for computing dynamical Lie algebras associated with quantum circuits combining parametrized gates and non-parametrized entangling gates, we aim to extend the applicability of recent algebraic results to architectures that are actually used in quantum machine learning.
\end{abstract}

\section{Introduction}
\label{sec:introduction}

Variational Quantum Algorithms (VQAs) use Parameterized Quantum Circuits (PQCs) to prepare families of quantum states or unitaries that are optimized through a classical feedback loop \cite{Peruzzo_2014} \cite{Cerezo_2021} \cite{Biamonte_2017}. Their practical behavior depends on the architecture of the parameterized circuit.  Expressivity, controllability, trainability, and the geometry of the loss landscape are all affected by which infinitesimal directions the circuit can generate as its trainable parameters are varied. A standard tool for characterizing the infinitesimal directions associated with a PQC is the Dynamical Lie Algebra (DLA). In fact, for a circuit whose gates are continuously parameterized as $e^{-i\theta_j\HH_j}$, the DLA noted $\g$, is defined as the Lie closure of the skew-Hermitian generators $i\HH_j$, and is given by
\begin{equation}
\g=\left\langle i\HH_1,\ldots,i\HH_K\right\rangle_{\mathrm{Lie}}.
\label{eq:standard-dla}
\end{equation}
It is the smallest real vector space containing the skew-Hermitian generators and closed under commutators. This construction of the DLA is the setting considered in common Lie-algebraic theories of barren plateaus and overparameterization \cite{Larocca_2022}\cite{Larocca_2023}\cite{Fontana_2024}\cite{Ragone_2024}, 
where the parameterized gates in the Ansatz are collectively governed by a global vector of parameters $\boldsymbol{\theta}$ partitioned into multi-parameters $\boldsymbol{\theta}_l$ for each layer. Specifically, the overall unitary and its layer-wise components take the form
 \begin{equation}
\label{eq: parameterized ansatz}
V(\boldsymbol{\theta}) =\prod_{\ell=1}^{L} V_\ell (\boldsymbol{\theta}_\ell) \quad \text{with} \quad V_\ell (\boldsymbol{\theta}_\ell) =  \prod_{k=1}^{K}  e^{-i\theta_{\ell,k} \HH_{\ell,k}},
\end{equation}
where $L$ is the depth of the PQC, $K$ represents the number of parameterized gates in each layer,  and $\boldsymbol{\theta}_{l}=(\theta_{l,1}, \dots, \theta_{l,K})$ denotes the multi-parameter vector associated with the $l$-th layer.

Many practically used Ans\"atze, for instance some Hardware-Efficient Ans\"atze (HEAs), however, can often interleave local trainable rotation gates with fixed entangling gates, such as controlled-$\X (\CX)$ or controlled-$\Z (\CZ)$ gate \cite{iwakiri2025universalityhardwareefficientansatzes} \cite{Kandala_2017} \cite{Cerezo_2021}. Such fixed gates are not associated with trainable continuous parameters and therefore, they should not be inserted directly into the generator set $\{\HH_1, \cdots, \HH_K\}$ used in Eq. \eqref{eq:standard-dla} to compute the DLA. Indeed, we can write an arbitrary fixed gate as
\begin{equation*}
U=e^{-i\mathtt{K}},
\end{equation*}
and one cannot infer that $i \mathtt{K}$ is an available infinitesimal generator unless the architecture actually provides the continuous family $e^{-it \mathtt{K}}$ with tunable $t$.  Treating a fixed gate as if it were continuously parameterized would not give the correct control system.

This issue was recently clarified in the literature, where West \textit{et al.} \cite{West_2024} initially treated fixed $\CZ$ gates as independent contributions to the DLA by writing
\begin{equation*}
\CZ_{i,i+1}=e^{-i\mathrm{cz}_{i,i+1}}, \qquad \mathrm{cz}_{i,i+1} = -\frac{\pi}{4} \left(\I - \Z_i - \Z_{i+1} + \Z_i \Z_{i+1}
\right),
\end{equation*}
and therefore including the corresponding $\mathrm{cz}_{i,i+1}$ among the generators. However, Xiao and Li \cite{xiao2025commentprovablytrainablerotationally} showed through a counterexample that this procedure is not valid for fixed gates, since introducing the generator of an unparameterized gate directly into the Lie closure can lead to a DLA that does not correctly represent the dynamics of the parameterized circuit. Consequently, West \textit{et al.} issued an erratum \cite{g4zk-79l9}, clarifying that their derivation applies to arbitrary-angle controlled-$\Z$ rotations rather than to fixed $\CZ$ gates, thereby confirming that fixed gates should not be treated as independent generators in the computation of the DLA.

On the other hand, fixed gates cannot simply be ignored.  

Suppose we have a quantum state $\ket{\psi} \in \hh = \hh (2^n)$, to which we successively apply two unitary operators $\U$ and $\mathtt{A}$. Recall that a quantum operator $\U$ is unitary if it is a linear operator $\U: \hh \to \hh$, satisfying
\begin{align*}
\U \U^\dagger = \U^\dagger \U = \I,
\end{align*}
where $\U^\dagger$ is the Hermitian conjugate of $\U$ and $\I$ is the identity operator. Then,
\begin{align*}
\mathtt{A} \U \ket{\psi} =  \U \big(\U^\dagger \mathtt{A} \U \ket{\psi} \big),
\end{align*}
meaning that applying $\U$ to the original state $\ket{\psi}$, and then applying $\mathtt{A}$, is equivalent to first applying the transformed operator $\U^\dagger \mathtt{A} \U$ to the original state, and then applying  $\U$. We can say that the operator $\mathtt{A}$ evolves under $\U$ such as:
\begin{align}\label{eq:conj}
\mathtt{A} \to \U^\dagger \mathtt{A} \U.
\end{align}

If $\mathtt{A}$ is a trainable one-parameter gate taking the form
\begin{equation*}
e^{-i\theta \HH},
\end{equation*}
its evolution under $\U$ will be
\begin{equation*}
e^{-i\theta \HH} \to \U^\dagger e^{-i\theta \HH} \U,    
\end{equation*}
and we will have the following equivalence: 
\begin{equation}\label{eq:conj1}
e^{-i\theta \HH} \ \U \ket{\psi}=  \U \big(\U^\dagger e^{-i\theta \HH} \U \big)\ket{\psi}.    
\end{equation}
Using the fact that for a square real or complex matrix $A$ and for a non-singular matrix $B$,
\begin{align*}
e^{B^{-1} A B}  = B^{-1} e^A B,
\end{align*}
from Eq. \eqref{eq:conj1} we finally have:
\begin{equation}
e^{-i\theta \HH} \ \U = \U e^{-i\theta (\U^\dagger \HH\U )} = \U e^{-i\theta\Ad_{\U}(\HH)},
\label{eq:intro-conjugation}
\end{equation}
where 
\begin{align}
\Ad_{\U}(\HH) = \U^\dagger  \HH \U.  
\end{align}
Thus, for a circuit composed of repeated layers of the form $\U e^{-i\theta \HH}$, the first layer involves the original trainable generator $\HH$. Starting from the second layer, the fixed block $\U$ from the preceding layer transforms the trainable generator by conjugation, so that $\HH$ becomes $\U^\dagger \HH \U$. Repeating the same architecture then generates the sequence
\begin{equation*}
\HH,\qquad \Ad_{\U}(\HH),\qquad \Ad_{\U}^{2}(\HH),\qquad\ldots.
\end{equation*}
Therefore it makes sense to define the DLA in this case as the Lie algebra generated by the successive $\Ad_\U$-conjugates of the original generators (which correspond to the first layer).

This paper formalizes that construction. 
We consider homogeneous fixed-block Ans\"atze, namely architectures in which the same single fixed unitary block $\U$ is repeated at every layer. After rewriting the circuit in terms of effective generators obtained by successive adjoint actions, we then focus on Clifford fixed blocks and Pauli-string generators. In this setting, conjugation induces a finite permutation of Pauli srings, which implies that the set of effective generators obtained after conjugation is finite. The corresponding finite orbit gives a upper bound on the depth at which we have sufficiently many layers to generate the full DLA.

\paragraph{Main contributions.} The main contributions of this work can be summarized as follows:
\begin{enumerate}[leftmargin=1.7em]
\item We define the Dynamical Lie Algebra for QPCs where parametrized layers are alterned with identical non-parametrized layers with unitary $\U$. The idea is that we push all non-parametrized gates to the right, at the cost of conjugating the Lie generators: hence the iterated $\Ad_U$-conjugates of the initial generators must be included.

\item When the fixed block $\U$ is Clifford and the generators to conjugate consist of Pauli strings, we bound the Lie saturation depth by the order of a binary symplectic matrix over $\GF_2$.

\item We specialize the framework to an example of an Ansatz, consisting of $\RY$ rotations and $\CX$ gates. We derive its binary matrix representation, and obtain explicit conjugation periods.
\end{enumerate}

\section{Effective generators}

\subsection{Circuit model}

Let $\mathcal H = (\mathbb C^2)^{\otimes n}\simeq \mathbb C^{2^n}$, and consider a depth $L$-Ansatz in which each layer contains a continuously parameterized block, followed, in physical time, by a fixed unitary block $\U$.  With the standard convention that operators acting first appear on the right, the $L$-layer unitary (Fig. \ref{fig:HEA}) is written
\begin{equation}
V_L(\boldsymbol\theta) = \Big(\U \ \W_L(\boldsymbol{\theta}_L) \Big)\cdots \Big(\U \ \W_2(\boldsymbol{\theta}_2) \Big) \Big(\U \ \W_1(\boldsymbol{\theta}_1) \Big).
\label{eq:circuit-general}
\end{equation}

\begin{figure}[htpb]
\centering
\scalebox{0.7}{
\begin{quantikz}
 & \gate[4]{\mathtt{W}_1(\boldsymbol{\theta}_1)} & \gate[4]{\U} & \gate[4]{\mathtt{W}_2(\boldsymbol{\theta}_2)} & \gate[4]{\U} & \cdots & \gate[4]{\mathtt{W}_L(\boldsymbol{\theta}_L)} & \gate[4]{\U} &   \\ 
 & & & & & \cdots & & &  \\
&\cdots &\cdots &\cdots &\cdots & \cdots & & &  \\
 & & & & & \cdots & & &   
\end{quantikz}}
\caption{\textbf{General structure of the Hardware-Efficient Ansatz (HEA) considered in this work.} The circuit consists of $L$ successive layers, each combining a parameterized block $\mathtt{W}_l(\boldsymbol{\theta}_l)$, where $\boldsymbol{\theta}_l$ denotes the trainable parameters of layer $l$, followed by a fixed, unparameterized unitary block $\U$ that is identical across all layers.}
\label{fig:HEA}
\end{figure}

 Each trainable block has the form
\begin{equation}
\W_\ell(\boldsymbol\theta_\ell) = \prod_{k=1}^{K} e^{-i\theta_{\ell,k}\HH_k},
\label{eq:Wblock}
\end{equation}
where the hermitian operators
\begin{equation}
\G_0=\{\HH_1,\ldots,\HH_K\}
\label{eq:G0}
\end{equation}
are the continuously available generators of one layer.

Three assumptions underlying Eq.~\eqref{eq:circuit-general} should be stated explicitly: 

(i) the fixed block $\U$ is assumed to belong to the Clifford group \cite{Garc_a_lvarez_2020} \cite{Bittel_2026} and to be identical in every layer; 

(ii) the continuously parameterized block is repeated across layers with the same generator set $\G_0$, while the corresponding parameter values are independent from one layer to another;

(iii) the fixed block $\U$ may itself be decomposed as
\begin{equation*}
\U = \U_q \U_{q-1}\cdots \U_1,
\end{equation*}
where each $\U_j$ denotes a fixed gate.

\subsection{Effective generators and dynamical Lie algebra}

For a trainable gate $e^{-i\theta \HH},$ the derivative at $\theta=0$ gives
\begin{equation*}
\left.\frac{d}{d\theta}\big(e^{-i\theta \HH} \big)\right|_{\theta=0}=-i \HH,
\end{equation*}
which is precisely the infinitesimal direction entering the DLA.

When we have fixed gates, they can all be gathered to the left of the matrix product (i.e., to the right of the circuit), using the transformation introduced in Eq.~\eqref{eq:intro-conjugation}. In this convention, a trainable gate is transformed only when it is moved across a fixed gate. For a single layer, no rearrangement is required:
\begin{align}
V_1 &= \U \prod_{k=1}^{K} e^{-i\theta_{1,k}\HH_k}.
\end{align}
Hence, at this depth, the available trainable generators are simply the original generators $\HH_k$.
For two layers,
\begin{align}
V_2 &= \U \prod_{k=1}^{K} e^{-i\theta_{2,k}\HH_k} \U \prod_{k=1}^{K}
e^{-i\theta_{1,k}\HH_k}.
\label{eq:v2}
\end{align}
Using Eq.~\eqref{eq:intro-conjugation}, we move the trainable block of the second layer toward the input side of the circuit (right side in Eq. \eqref{eq:v2}), through the fixed block $\U$ of the first layer. This gives
\begin{align}
V_2 &= \U^2 \left(\prod_{k=1}^{K}
e^{-i\theta_{2,k}\Ad_{\U}(\HH_k)} \right) \left(\prod_{k=1}^{K} e^{-i\theta_{1,k}\HH_k} \right).
\end{align}
Thus, the generators of the first trainable block remain unchanged, whereas those of the second block are transformed as
\begin{align*}
\HH_k \longrightarrow \Ad_{\U}(\HH_k), \ \ \text{where} \ \ \Ad_{\U}(\HH) = \U^\dagger  \HH \U
\end{align*}
Similarly,
\begin{align}
V_3 &= \U^3 \left(\prod_{k=1}^{K} e^{-i\theta_{3,k}\Ad_{\U}^{2}(\HH_k)} \right) \left(\prod_{k=1}^{K} e^{-i\theta_{2,k}\Ad_{\U}(\HH_k)} \right)
\left(\prod_{k=1}^{K} e^{-i\theta_{1,k}\HH_k}
\right),
\end{align}
so that the first trainable block therefore contains the original generators $\HH_k$, the second contains $\Ad_{\U}(\HH_k)$, and the third contains $\Ad_{\U}^{2}(\HH_k)$.

More generally, repeating the same rearrangement for a circuit of depth $L$ yields
\begin{equation}
V_L(\boldsymbol{\theta}) = \U^L \prod_{\ell=L}^{1} \left[\prod_{k=1}^{K} e^{-i\theta_{\ell,k}\Ad_{\U}^{\ell-1}(\HH_k)} \right],
\label{eq:VL-rewrite}
\end{equation}
where the outer product is ordered from $\ell=L$ to $\ell=1$, consistently with Eq. \eqref{eq:circuit-general}.

Therefore, the trainable directions appearing throughout the circuit are generated from the original set $\{\HH_k\}$ through repeated applications of the adjoint action,
\begin{equation*}
\HH_k,\qquad
\Ad_{\U}(\HH_k),\qquad
\Ad_{\U}^{2}(\HH_k),\qquad
\ldots,\qquad
\Ad_{\U}^{L-1}(\HH_k),
\end{equation*}
for $1 \leq k \leq K$.
This leads to the following definition.

\begin{Definition}
\label{def:effective-set}
Let $\G_0=\{\HH_1,\ldots,\HH_K\}$ be the set of generators associated with the parametrized gates in a single trainable block, and let $\U$ denote the fixed entangling block. The trainable directions appearing up to depth $L$ are
\begin{equation}\label{eq:effective-set}
\G_L = \bigcup_{j=0}^{L-1}
\Ad_{\U}^{j}(\G_0),
\end{equation}
where
\begin{equation*}
\Ad_{\U}(A) := \U^\dagger A \U.
\end{equation*}
The corresponding depth-$L$ dynamical Lie algebra is then
\begin{equation}
\g_L = \left\langle i\,\G_L \right\rangle_{\mathrm{Lie}}.
\label{eq:depth-dependent-dla}
\end{equation}
\end{Definition}

Definition \ref{def:effective-set} can be applied to a known circuit such as the one shown in Fig. \ref{fig:RZZ decom}, which depicts the $\mathtt{R_{ZZ}}(\theta)$ gate (Fig. \ref{fig:RZZ}) and its decomposition into a sequence of $\CX$ and $\RZ(\theta)$ gates (Fig. \ref{fig: RZZ decomposition}) \cite{Whitfield_2011} \cite{Mansky2023bsb}.
\begin{figure}[H]
    \centering
    \begin{subfigure}[t]{0.48\textwidth}
        \centering
  \begin{quantikz}
   & \gate[2]{\Z \Z(\theta)} &  \\
  &  & 
  \end{quantikz}
        \caption{}
        \label{fig:RZZ}
    \end{subfigure}%
    ~ 
    \begin{subfigure}[t]{0.48\textwidth}
        \centering
 \begin{quantikz}
   & \ctrl{1} & & \ctrl{1} & \\
  & \targ{} & \gate{\RZ (\theta)} &  \targ{} &
  \end{quantikz}
        \caption{}
    \label{fig: RZZ decomposition}
\end{subfigure}
\caption{\textbf{The $\RZZ(\theta)$ gate and its decomposition into elementary gates.} Figure (a) represents the two-qubit $\RZZ(\theta)$ rotation, while Figure (b) shows its decomposition into two CNOT $\CX$ gates surrounding a single-qubit rotation $\RZ(\theta)$, where $\theta$ denotes the rotation angle.}
    \label{fig:RZZ decom}
\end{figure}
For the original circuit in Fig.~\ref{fig:RZZ}, the Lie-algebraic structure is immediate, since the parameterized gate is generated by
\begin{equation*}
\G={\Z\otimes\Z},
\end{equation*}
and therefore
\begin{equation*}
\g = \Span\{i\Z\otimes\Z\}.
\end{equation*}

We now verify that the same generator is recovered from the decomposed circuit in Fig.~\ref{fig: RZZ decomposition}. The parameterized $\RZ$ gate is conjugated by the surrounding $\CX$,
\begin{equation*}
\U^\dagger (\I\otimes\RZ)\U, \qquad \U=\U^\dagger=\CX_{1,2}.
\end{equation*}
Since $\RZ(\theta)=e^{-i\frac{\theta}{2}\Z}$, and knowing that 
\begin{align*}
&\CX_{1,2} (\Z \otimes \I ) \CX_{1,2} = \Z \otimes \I, && \CX_{1,2} (\I \otimes \Z) \CX_{1,2} = \Z \otimes \Z,
\end{align*}
we get:
\begin{align*}
\U^\dagger \left(\I\otimes e^{-i\frac{\theta}{2}\Z}\right)\U &= e^{-i\frac{\theta}{2}\U^\dagger(\I\otimes\Z)\U} \\ &= e^{-i\frac{\theta}{2}\Z\otimes\Z}.
\end{align*}
Hence, the effective generator of the decomposed circuit is again $\Z\otimes\Z$, yielding
\begin{equation*}
\g = \Span\{i\Z\otimes\Z\}.
\end{equation*}
Thus, the decomposition preserves the generator and, consequently, the dynamical Lie algebra of the original $\RZZ$ gate.

\section{Depth-dependent growth of the effective generator set}
\label{sec:generator-growth}

\subsection{Depth-dependent growth of the dynamical Lie algebra}
\label{sec:dla-growth}

For parameterized circuits in which the generator set does not change from one layer to another, the DLA is directly given by
\begin{equation*}
\g = \left\langle i\G_0\right\rangle_{\mathrm{Lie}},
\end{equation*}
where $\G_0=\{\HH_1,\ldots,\HH_K\}$ is the set of the generators in one layer \cite{Larocca_2023} \cite{Ragone_2024}. In our setting, the fixed unitary block $\U$ will change the effective generator
from one layer to the next, i.e,
\begin{equation*}
\G_L = \bigcup_{j=0}^{L-1} \Ad_{\U}^{j}(\G_0),
\end{equation*}
so that
\begin{equation}
\G_{L+1} = \G_L \cup \Ad_{\U}^{L}(\G_0).
\label{eq:GLplus1}
\end{equation}
Hence
\begin{equation}
\G_L \subseteq  \G_{L+1},
\end{equation}
and the corresponding depth-dependent Lie closures satisfy
\begin{equation}
\g_L \subseteq \g_{L+1},
\label{eq:dla-monotonic-growth}
\end{equation}
implying
\begin{equation}
\dim(\g_L) \leq \dim(\g_{L+1}).
\end{equation}

At the end, we therefore obtain a filtration of the DLA $\g$ by Lie subalgebras, such that:
\begin{equation}
\g_1 \subseteq \g_2 \subseteq \cdots \subseteq \g,
\label{eq:dla-filtration}
\end{equation}
where
\begin{equation}
\g = \left\langle i \bigcup_{j=0}^{\infty} \Ad_{\U}^{j}(\G_0) \right\rangle_{\mathrm{Lie}}
\end{equation}
denotes the DLA of the circuit. This sequence is of course stationary because the dimension of $\mathfrak g \subset \mathfrak{su}(2^n)$ is finite.
We want to determine the minimal $L$ such that $\mathfrak g_L = \mathfrak g$, which we will call the \emph{Lie saturation depth}, denoted by $L_{\mathrm{Lie}}$. 

To be clear, $L_{\mathrm{Lie}}$ is not the overparametrization threshold. A priori, more layers are needed to actually fully explore all possible directions. A lower bound for this threshold is given by the obvious necessary condition that the total number of parameters must be greater than the dimension of the DLA, i.e. $KL \geq \dim \mathfrak g$ \cite{Larocca_2023}.

To address the question of determining $L_{\mathrm{Lie}}$, we examine the sequence of effective generator sets:
\begin{equation*}
\G_0,\qquad \Ad_{\U}(\G_0),\qquad \Ad_{\U}^{2}(\G_0),\qquad \ldots
\end{equation*}

Since $\Ad_{\U}$ permutes Pauli operators up to phase, the $\Ad_{\U}$-orbit of each generator (up to phase) is finite; thus the length of the $\Ad_\U$-orbit of $\G_0$ is an upper bound for $L_{\mathrm{Lie}}$. We will see in the next section that $\Ad_\U$-conjugation on Pauli strings up to phase is described by a binary matrix $M_{\U}$, hence the length of the orbit divides the order $T_{\U}$ of $M_{\U}$. We will see that this order can be rather large (depending on the number of qubits), but $L_{\mathrm{Lie}}$ may be much smaller, if the first few conjugates are enough to generate all others by taking Lie brackets.

\subsection{Periodicity of the generators under conjugation}
\label{sec:finite-recurrence}

The fixed block in Eq. \eqref{eq:circuit-general} is assumed to be a Clifford operator. We recall that the defining property of Clifford operators is that they normalize the $n$-qubit Pauli group, i.e conjugating a Pauli operator by a Clifford operator produces another Pauli operator, up to a phase
\cite{gottesman1998heisenbergrepresentationquantumcomputers} \cite{Dehaene_2003}. The generators in $\G_0$ being Pauli strings, every element generated by repeated applications of $\Ad_{\U}$ remains a Pauli string up to phase. Since the phase is irrelevant to compute the generated Lie algebra, we will work modulo scalars, and use [ ] brackets to indicate that.

The set of Pauli strings modulo phase being finite, repeated conjugation by the same Clifford operator must eventually lead to a repetition,
immediately implying the following Proposition. 

\begin{Proposition}
\label{prop:common-clifford-recurrence}
Let $\G_0$ be a set of Hermitian $n$-qubit Pauli strings,
and let $\U$ be a fixed Clifford operator.
Then, there exists a finite integer $d>0$ such that
\begin{equation}
\Ad_{\U}^{d}([\G_0])=[\G_0].
\label{eq:common-clifford-period}
\end{equation}
Consequently, no new effective generator directions
appear beyond depth $d$, and
\begin{equation}
\g_L=\g_d, \qquad \forall L\geq d.
\end{equation}
\end{Proposition}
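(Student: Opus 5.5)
The plan is to make $\Ad_{\U}$ into a permutation of a finite set and read off the period. Let $\mathcal P_n$ be the set of $n$-qubit Pauli strings modulo phase, identified with $\{\I,\X,\Y,\Z\}^{\otimes n}$. It has $4^n$ elements. Because $\U$ is Clifford, for each Pauli string $P$ we have $\U^\dagger P\U=\lambda P'$ with $\lambda\in\{\pm1,\pm i\}$ and $P'\in\mathcal P_n$. So $\Ad_{\U}$ induces a well-defined map $[\Ad_{\U}]:\mathcal P_n\to\mathcal P_n$, $[P]\mapsto[\U^\dagger P\U]$. It is well defined because conjugation is linear, so it commutes with multiplication by scalars.

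Next I check that this map is a bijection. It is injective: if $\U^\dagger P\U=\mu\,\U^\dagger Q\U$, conjugating back by $\U$ gives $P=\mu Q$. An injective self-map of a finite set is bijective, so $[\Ad_{\U}]$ lies in the symmetric group on the $4^n$ elements of $\mathcal P_n$. Every element of a finite group has finite order, so there is $d\ge1$ with $[\Ad_{\U}]^{d}=\mathrm{id}$; for instance $d$ divides $(4^n)!$. Then $\Ad_{\U}^d([\G_0])=[\G_0]$ elementwise, not just as a set, which proves Eq.~\eqref{eq:common-clifford-period}. One could instead take $d$ to be the orbit length of the set $[\G_0]$, but the element-order argument is simpler.

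For the consequence, fix $j\ge0$ and write $j=qd+r$ with $0\le r<d$. Then $[\Ad_{\U}^{j}(\HH_k)]=[\Ad_{\U}^{r}(\HH_k)]$, so each element of $\Ad_{\U}^{j}(\G_0)$ is a nonzero scalar multiple of an element of $\G_d=\bigcup_{r=0}^{d-1}\Ad_{\U}^r(\G_0)$. The remaining step, and the only point needing care, is that the phase does not change the generated Lie algebra. Since $\Ad_{\U}$ preserves Hermiticity and Pauli strings are Hermitian, $\Ad_{\U}^j(\HH_k)$ and $\Ad_{\U}^r(\HH_k)$ are both Hermitian and agree up to a scalar, so that scalar is $\pm1$. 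Hence $i\Ad_{\U}^j(\HH_k)=\pm\, i\Ad_{\U}^r(\HH_k)$ lies in the real span of $i\G_d$. It follows that $i\G_L\subseteq \g_d$ for every $L\ge d$, so $\g_L\subseteq\g_d$. Combined with the monotonicity \eqref{eq:dla-monotonic-growth}, this gives $\g_L=\g_d$ for all $L\ge d$.
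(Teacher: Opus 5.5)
Your proposal is correct and uses the same finiteness argument as the paper: Pauli strings modulo phase form a finite set on which $\Ad_{\U}$ acts, so the conjugates of $\G_0$ must recur. The paper states this in a single sentence before the proposition and later sharpens it to $d \mid T_{\U}=\ord(M_{\U})$; you also make explicit two points it leaves implicit, namely that injectivity makes $\Ad_{\U}$ a genuine permutation (giving pure rather than merely eventual periodicity) and that Hermiticity forces the residual phase to be $\pm1$ (so it does not change the real span of $i\G_d$).
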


Proposition~\ref{prop:common-clifford-recurrence} shows that once all elements of the corresponding conjugation orbits have appeared, increasing the circuit depth does not generate any new effective directions.

The next question is therefore to determine $d$. However, following the repeated conjugations directly becomes more difficult as the number of qubits increases.

The Clifford formalism provides a convenient way to study this recurrence. Indeed, Pauli operators modulo phase can be encoded as binary vectors, and Clifford conjugation acts on these vectors through an invertible linear transformation over $\GF_2$ \cite{Dehaene_2003}. The problem can therefore
be reduced to studying successive powers of the associated binary matrix.

\section{Binary Clifford Representation for Determining the Period \texorpdfstring{$d$}{d}}

In this section, we use the standard stabilizer formalism and the binary representation of Clifford operations \cite{gottesman1998heisenbergrepresentationquantumcomputers} \cite{Dehaene_2003} \cite{Hostens_2005} \cite{Helsen_2018} to determine the period $d$. We first represent Pauli strings by vectors in the finite field $\GF_2^{2n}$, and then study the corresponding action of the Clifford operator $\U$ in this representation. This allows us to analyze the repeated conjugations through the powers of the associated binary matrix.

\subsection{Binary representation of Clifford conjugation}
\label{sec:methods-clifford}

\subsubsection{Binary representation of the Pauli group}

The group generated by tensor products of the four single-qubit Pauli operators $\I$, $\X$, $\Y$, and $\Z$ is called the $n$-qubit Pauli group, given by
\begin{equation}
\mathcal{P}_n = \{\pm 1,\pm i\} \cdot \{\I,\X,\Y,\Z\}^{\otimes n},
\end{equation}
and contains $4^{n+1}$ elements. Any two elements of $\mathcal{P}_n$ either commute or anticommute, and the square of every element is equal to $\pm\I$.

In what follows, since we work with a binary representation over $\GF_2$, which encodes only the Pauli operators and not their global phases, we can therefore consider the quotient group
\begin{equation}
[\mathcal{P}]_n = \mathcal{P}_n/Z(\mathcal{P}_n) = \{[\I],[\X],[\Y],[\Z]\}^{\otimes n},
\end{equation}
where
\begin{equation*}
Z(\mathcal{P}_n) = \{\pm\I^{\otimes n},\pm i\I^{\otimes n}\}
\end{equation*}
is the center of $\mathcal{P}_n$, and $[\I],[\X],[\Y]$ and $[\Z]$ respectively denote the equivalence classes of $\I$, $\X$, $\Y$, and $\Z$ modulo the global phases $\{\pm1,\pm i\}$.  Every Pauli operator in $[\mathcal{P}]_n$ can be written as
\begin{align*}
\begin{aligned}
[P](x;z) &= \left([\X]^{x_1}\otimes\cdots\otimes[\X]^{x_n}\right)
\left([\Z]^{z_1}\otimes\cdots\otimes[\Z]^{z_n}\right) \\
&= [\X]_1^{x_1}\cdots[\X]_n^{x_n}
[\Z]_1^{z_1}\cdots[\Z]_n^{z_n}
= [\X]^{x}[\Z]^{z}.
\end{aligned}
\end{align*}
with
\begin{equation*}
x=(x_1,\ldots,x_n) \in\GF_2^n,\qquad z=(z_1,\ldots,z_n) \in\GF_2^n, 
\end{equation*}
where we use the row-vector convention adopted in Ref. \cite{Hashagen_2018}. The Pauli string $[P](x;z)$ is then represented by the binary vector $(x;z)\in\GF_2^{2n}$. In particular,
\begin{equation}
[\I]\longleftrightarrow(0;0),\qquad
[\X]\longleftrightarrow(1;0),\qquad
[\Z]\longleftrightarrow(0;1),\qquad
[\Y]\longleftrightarrow(1;1),
\end{equation}
where the last correspondence follows from $[\Y] =[\X] [\Z]$. Thus, the projective Pauli group $[\mathcal{P}]_n$ is isomorphic to the binary vector space $\GF_2^{2n}$, i.e
\begin{align*}
[\mathcal{P}]_n\cong \GF_2^{2n}.
\end{align*}

\subsubsection{Binary representation of Clifford conjugation}

The $n$-qubit Clifford group, which is the normalizer of the Pauli group in the unitary group $\mathcal{U}$, is defined as:
\begin{equation}
\mathrm{Cliff}_n = N_{\mathcal{U}(2^n)}(\mathcal{P}_n) = \left\{\mathtt{C}\in \mathcal{U}(2^n): \mathtt{C}^\dagger\mathcal{P}_n\mathtt{C} = \mathcal{P}_n \right\},
\end{equation}
i.e,
\begin{equation}
\mathtt{C}^\dagger P\mathtt{C} \in \mathcal{P}_n, \qquad \forall\,P\in\mathcal{P}_n.
\end{equation}

Since conjugation by $\mathtt{C}$ maps Pauli operators to Pauli operators and is invertible, it permutes the elements of $\mathcal{P}_n$. Moreover, conjugation preserves the center $Z(\mathcal{P}_n)$, so this action induces a well-defined permutation on the quotient $[\mathcal{P}]_n=\mathcal{P}_n/Z(\mathcal{P}_n)$. Through the identification $[\mathcal{P}]_n\cong \GF_2^{2n}$, the induced action can be represented directly on the associated binary vectors by a matrix in $\mathrm{GL}(2n,\GF_2)$.

For two Pauli classes represented by
\begin{equation*}
v=(x;z), \qquad v'=(x';z'),
\end{equation*}
their product satisfies
\begin{equation*}
[P](v)[P](v') = [P](v+v').
\end{equation*}
Since Clifford conjugation preserves this product and is invertible, it induces an
invertible linear transformation on $\GF_2^{2n}$. Therefore, there exists a linear matrix
\begin{equation*}
M_{\mathtt C}\in\mathrm{GL}(2n,\GF_2)
\end{equation*}
such that, using the row-vector convention, we have:
\begin{equation}
\mathtt C^\dagger [P](v)\mathtt C = [P](vM_{\mathtt C}).
\end{equation}

Note that we have been using row vectors and matrices acting on the right, because it must reflect right conjugation, which is a right action.

The commutation relation between two Pauli classes is determined by the binary
symplectic product
\begin{equation}
[v,v'] = vJv'^{T}, \qquad J=
\begin{pmatrix}
0 & \I_n\\
\I_n & 0
\end{pmatrix},
\label{eq:binary-symplectic-form}
\end{equation}
where $[v,v']=0$ for commuting Pauli classes and $[v,v']=1$ for anticommuting ones, and $J$ is the matrix representing this sympletic form over $\GF_2$. Note that $-1=1$ over $\GF_2$. More generally, the matrices preserving this product form the symplectic group
\begin{equation}
\operatorname{Sp}(2n,\GF_2) = \left\{M\in\mathrm{GL}(2n,\GF_2) \,:\, MJM^{T}=J \right\}.
\end{equation}

Since Clifford conjugation preserves commutation relations,
\begin{equation*}
[vM_{\mathtt C},v'M_{\mathtt C}] = [v,v']
\end{equation*}
for all $v,v'\in\GF_2^{2n}$, which implies
\begin{equation*}
M_{\mathtt C}JM_{\mathtt C}^{T} = J.
\end{equation*}
Hence,
\begin{equation*}
M_{\mathtt C}\in\operatorname{Sp}(2n,\GF_2).
\end{equation*}

Repeated conjugations by the same Clifford operator are consequently represented by successive powers of the same symplectic matrix. Indeed, after one conjugation,
\begin{equation*}
[P](x,z) \longmapsto [P]\!\left((x;z)M_{\mathtt{C}}\right),
\end{equation*}
while after two conjugations,
\begin{equation*}
[P](x,z) \longmapsto [P]\!\left((x;z)M_{\mathtt{C}}^{2}\right).
\end{equation*}
More generally, after $k$ successive conjugations by $\mathtt{C}$,
\begin{equation}
\Ad_{\mathtt{C}}^{k}\!\left([P](x,z)\right) = [P]\!\left((x;z)M_{\mathtt{C}}^{k} \right).
\label{eq:repeated-binary-action}
\end{equation}

\subsection{Conjugation period and order of the binary action}
\label{sec:conjugation-order}
We now characterize the recurrence establishes in Proposition \ref{prop:common-clifford-recurrence} using the binary representation introduced above.

From Eq.~\eqref{eq:repeated-binary-action}, after $k$ conjugations by a fixed Clifford operator $\U$, the binary vector associated with a Pauli string evolves as
\begin{equation*}
(x;z) \longmapsto (x;z)M_{\U}^{k}.
\end{equation*}
A complete recurrence of the conjugation action is obtained when every binary Pauli vector returns to its initial value, namely
\begin{equation*}
(x;z)M_{\U}^{k} = (x;z), \qquad \forall\,(x;z)\in\GF_2^{2n}.
\end{equation*}
This is equivalent to
\begin{equation}
M_{\U}^{k}=I_{2n}.
\end{equation}

The smallest positive integer satisfying this relation is the order of $M_{\U}$,
\begin{equation} 
T_{\U} := \ord(M_{\U}) = \min\left\{k\geq1: M_{\U}^{k}=I_{2n} \right\}.
\end{equation}

\begin{Definition}
\label{def:conjugation-order}
Let $\U$ be a fixed Clifford operator and let $M_{\U}$ be the binary matrix representing its conjugation action. The global period of the conjugation action on the projective Pauli group is defined by:
\begin{equation}
T_{\U}=\ord(M_{\U}).
\end{equation}
Consequently,
\begin{equation}
\Ad_{\U}^{T_{\U}}[\HH_j] = [\HH_j], \qquad \forall\,\HH_j\in\G_0.
\end{equation}
\end{Definition}

The order $T_{\U}$ characterizes the periodicity of the $\Ad_U$ on all Pauli strings. The period for the particular set $\G_0$ may be smaller. In any case, the minimal common period $d$ introduced in Proposition~\ref{prop:common-clifford-recurrence} satisfies
\begin{equation}
d\mid T_{\U}.
\end{equation}

We now examine the binary transformations associated with two elementary Clifford entangling gates that commonly appear as fixed gates in variational quantum circuits, the controlled-$\X$ and controlled-$\Z$ gates.

\subsubsection{Controlled-\texorpdfstring{$\X$}{X} gate}

Consider a single controlled-$\X$ gate $\CX_{c,t}$, with control qubit $c$ and target qubit $t$. Its conjugation action on the Pauli generators is
\begin{align}
\begin{aligned}
\Ad_{\CX_{c,t}}(\X_c) &= \X_c\X_t, & \Ad_{\CX_{c,t}}(\X_t) &= \X_t, \\ \Ad_{\CX_{c,t}}(\Z_c) &= \Z_c, & \Ad_{\CX_{c,t}}(\Z_t) &= \Z_c\Z_t,
\end{aligned}
\label{eq:CX-conjugation-rules}
\end{align}
while Pauli operators acting on the remaining qubits are unchanged.

Thus, the $\X$- and $\Z$-components remain separated under conjugation. For the binary vector $(x;z)$, the transformation can be written as
\begin{equation}
x \longmapsto xB_{c,t}, \qquad z \longmapsto zB_{c,t}^{-T},
\end{equation}
where
\begin{equation}
B_{c,t} = I_n+E_{c,t},
\end{equation}
and $E_{c,t}$ denotes the $n\times n$ elementary matrix containing a single nonzero entry in position $(c,t)$.

The complete binary matrix associated with the conjugation action of the controlled-$\X$ gate is therefore
\begin{equation}
M_{\CX_{c,t}} =
\begin{pmatrix}
B_{c,t} & 0\\ 0 & B_{c,t}^{-T}
\end{pmatrix}.
\end{equation}

This structure extends to a fixed block $\U$ composed only of controlled-$\X$ gates. Indeed, since each controlled-$\X$ gate acts separately on the $\X$- and $\Z$-components, the binary matrix associated with the complete block retains the same block-diagonal form,
\begin{equation}
M_{\U} =
\begin{pmatrix}
B_{\U} & 0\\ 0 & B_{\U}^{-T}
\end{pmatrix},
\end{equation}
where $B_{\U}\in\mathrm{GL}(n,\GF_2)$ is obtained by composing the elementary matrices associated with the controlled-$\X$ gates forming $\U$.

After $k$ repetitions of the same fixed block,
\begin{equation*}
M_{\U}^{k} =
\begin{pmatrix}
B_{\U}^{k} & 0\\ 0 & \left(B_{\U}^{-T}\right)^{k}
\end{pmatrix}
=
\begin{pmatrix}
B_{\U}^{k} & 0\\ 0 & \left(B_{\U}^{k}\right)^{-T}
\end{pmatrix}.
\end{equation*}
Therefore,
\begin{equation*}
M_{\U}^{k}= \I_{2n}
\quad\Longleftrightarrow\quad B_{\U}^{k}= \I_n.
\label{eq:CX-order-reduction}
\end{equation*}

Consequently,
\begin{equation}
\ord(M_{\U}) = \ord(B_{\U}).
\label{eq:CX-order-B}
\end{equation}

Hence, for a fixed block composed exclusively of controlled-$\X$ gates, the conjugation period can be determined by considering only one of the two Pauli parts.

\subsubsection{Controlled-\texorpdfstring{$\Z$}{Z} gate}

Consider now a single controlled-$\Z$ gate $\CZ_{c,t}$. Its conjugation rules are
\begin{align}
\begin{aligned}
\Ad_{\CZ_{c,t}}(\X_c) &= \X_c\Z_t, & \Ad_{\CZ_{c,t}}(\X_t) &= \Z_c\X_t, \\ \Ad_{\CZ_{c,t}}(\Z_c) &= \Z_c, & \Ad_{\CZ_{c,t}}(\Z_t) &= \Z_t,
\end{aligned}
\end{align}
while Pauli operators acting on the remaining qubits are unchanged. Introducing
\begin{equation*}
A_{c,t} = E_{c,t}+E_{t,c},
\end{equation*}
the binary action is
\begin{equation*}
(x;z) \longmapsto \left(x;\, z+xA_{c,t} \right).
\end{equation*}

The corresponding binary matrix is therefore
\begin{equation}
M_{\CZ_{c,t}} =
\begin{pmatrix}
\I_n & A_{c,t}\\ 0   & \I_n
\end{pmatrix}.
\label{eq:M-CZ}
\end{equation}

This structure can be extended to a fixed block $\U$ composed only of
controlled-$\Z$ gates. Since, for each controlled-$\Z$ gate, the binary
transformation has the form
\begin{equation}
x\longmapsto x, \qquad z\longmapsto z+xA_{c,t},
\end{equation}
the composition of several such gates preserves the $\X$-component and
modifies only the $\Z$-component through a linear contribution depending on
$x$. Therefore, the binary matrix associated with the complete block retains
the same upper-triangular form,
\begin{equation}
M_{\U} =
\begin{pmatrix}
I_n & A_{\U}\\ 0  & I_n
\end{pmatrix},
\end{equation}
where $A_{\U}\in\GF_2^{n\times n}$ is the sum, over $\GF_2$, of the matrices associated with the controlled-$\Z$ gates forming $\U$.

If
\begin{equation*}
\U = \CZ_{c_q,t_q}\cdots \CZ_{c_2,t_2} \CZ_{c_1,t_1},
\end{equation*}
then, over $\GF_2$,
\begin{equation*}
A_{\U} = \sum_{j=1}^{q} A_{c_j,t_j}.
\end{equation*}
Hence, after $k$ repetitions of the same fixed block,
\begin{equation}
M_{\U}^{k} =
\begin{pmatrix}
\I_n & kA_{\U}\\ 0   & \I_n
\end{pmatrix}.
\end{equation}

Since the calculations are performed over $\GF_2$,
\begin{equation}
M_{\U}^{2} =
\begin{pmatrix}
\I_n & 2A_{\U}\\ 0   & I_n
\end{pmatrix}
= \I_{2n}.
\end{equation}
Therefore,
\begin{equation}
\ord(M_{\U}) =
\begin{cases}
1, & A_{\U}=0,\\[1mm]
2, & A_{\U}\neq 0.
\end{cases}
\end{equation}

Hence, for a fixed block composed exclusively of controlled-$\Z$ gates, the conjugation period is completely determined by the matrix $A_{\U}$. In particular, any nontrivial such block has binary conjugation order two.

\section{An example on a specific Ansatz}

\subsection{Effective generators}

We consider the Ansatz in Fig. \ref{fig:QML model}, composed of $\RY$ rotation gates of the form
\begin{equation}
\mathtt{W}(\boldsymbol{\theta})=\bigotimes _{j=1}^{n} e^{-i\frac{\theta_{j}}{2}\Y_j},
\end{equation}
while the fixed entangling block consists of a ring of $\CX$ gates. 

\begin{figure}[htpb]
\centering
\begin{quantikz}
\lstick{$j=1$} & \gate{\RY(\theta_0)} \gategroup[6,steps=6,style={dashed},label style = {yshift=0.3cm}]{One layer} &\targ{} & \ctrl{1} & & & & \\ 
\lstick{$j=2$}  & \gate{\RY(\theta_1)} & & \targ{} & \ctrl{1} & & & \\
\lstick{$j=3$} & \gate{\RY(\theta_2)} & & &\targ{} & & & \\
 &\cdots &\cdots &\cdots  & \cdots &\cdots &\cdots &\cdots  \\
\lstick{$j=n-1$}  & \gate{\RY(\theta_{n-1})} &  & & & \ctrl{1} & & \\
\lstick{$j=n$} &  \gate{\RY(\theta_n)} & \ctrl{-5} & & & \targ{}& & 
\end{quantikz}
\caption{\textbf{One layer of the Ansatz considered in this work.} The layer consists of parameterized $\RY(\theta_j)$ rotations applied to each of the $n$ qubits, followed by a fixed sequence of $CX$ gates arranged in a ring architecture, i.e each qubit acts as the control for the next one, with the last qubit connected back to the first. The dashed box indicates one layer of the circuit, which can be repeated to construct the complete Ansatz.}
\label{fig:QML model}
\end{figure}

In the physical circuit, following the direction of state evolution, these $\CX$ gates are applied in the order
\begin{equation*}
\CX_{n,1},\quad \CX_{1,2},\quad \CX_{2,3},\quad \ldots,\quad \CX_{n-1,n},
\end{equation*}
then the conjugation block $\U$ is 
\begin{equation}
\U = \CX_{n-1, n} \CX_{n-2, n-1}, \cdots \CX_{2,3} \CX_{1,2} \CX_{n,1}.
\end{equation}
Thus, the first conjugation is performed by $\CX_{n-1,n}$ and the last by $\CX_{n,1}$.

The action of the full ring block $\U$ on the three basic Pauli operators is defined by the following relations: 
\begin{subequations}\label{eq:conj XYZ by U}
\begin{align}
 & \left\{ \begin{aligned}
 &\U^\dagger \X_j \U = \X_j\X_{j+1}, \ \text{for} \ j < n-1 \\
 &\U^\dagger \X_{n-1} \U = \X_1\X_{n-1} \X_n, \\ &\U^\dagger \X_n \U = \X_1\X_n,
 \end{aligned} \right.
\end{align}
\begin{align}
 & \left\{ \begin{aligned}
 &\U^\dagger \Z_1 \U = \Z_1\Z_n, \\
 &\U^\dagger \Z_j \U = \Big(\prod_{i = 1} ^{j} \Z_i \Big) \Z_n, \ \text{for} \ 1 < j < n, \\
 &\U^\dagger \Z_n \U = \prod_{i = 1} ^{n-1} \Z_i,
 \end{aligned} \right.
\end{align}
\begin{align}\label{eq:conj Y with U}
 &\left\{ \begin{aligned}
 &\U^\dagger \Y_1 \U = \Y_1 \X_2 \Z_n, \\
 &\U^\dagger \Y_j \U = \Big(\prod_{i = 1} ^{j-1} \Z_i \Big) \Y_j \X_{j+1} \Z_n, \ \text{for} \ 1 < j < n-1, \\
 &\U^\dagger \Y_{n-1} \U = -\Y_1 \Big(\prod_{i = 2} ^{n-2} \Z_i \Big) \Y_{n-1} \Y_n, \\
 &\U^\dagger \Y_n \U = \Y_1 \Big(\prod_{i = 2} ^{n-1} \Z_i \Big) \X_n.
 \end{aligned}
 \right.
\end{align}
\end{subequations}

Then for this specific Ansatz, at depth L, the generators $\HH_j = \Y_j$ at layer $L>1$ will be conjugated according to relations \eqref{eq:conj Y with U}.

\subsection{Binary action of the full entangling block
\texorpdfstring{$\U$}{U}}
\label{subsec:actionU}

We now apply the general construction introduced above to the entangling block
\begin{equation*}
\U = \CX_{n-1,n}\cdots \CX_{2,3}\CX_{1,2}\CX_{n,1}.
\end{equation*}

Since $\U$ is composed exclusively of controlled-$X$ gates, its binary
representation has the block-diagonal form derived previously,
\begin{equation*}
M_{\U} =
\begin{pmatrix}
B_n & 0\\ 0 & B_n^{-T}
\end{pmatrix},
\end{equation*}
where
\begin{equation}
B_n = B_{n-1,n}\cdots B_{2,3}B_{1,2}B_{n,1} \ \ \in \mathrm{GL}(n,\GF_2)
\end{equation}
describes the action on the $\X$-component.

Therefore, for a Pauli operator $P(x,z)=\X^x\Z^z$, with $x,z\in \GF_2^n$ represented as row vectors, the conjugation action $\Ad_{\U}(P) = \U^\dagger P\U$ induces the binary transformation
\begin{equation*}
(x;z) \longmapsto (xB_n\,;\,zB_n^{-T}).
\end{equation*}

After $k$ successive conjugations,
\begin{equation*}
(x;z) \longmapsto (x;z)M_{\U}^{k},
\end{equation*}
with
\begin{equation*}
M_{\U}^{k} =
\begin{pmatrix}
B_n^{k} & 0\\ 0 & (B_n^{k})^{-T}
\end{pmatrix}.
\end{equation*}

Hence, as established in the general controlled-$\X$ case,
\begin{equation*}
\ord(M_{\U})=\ord(B_n),
\end{equation*}
so it is sufficient to determine the order of $B_n$.

\subsection{Derivation of \texorpdfstring{$B_n$}{Bn}}

To determine $B_n$, we consider the action of $\U$ on the Pauli operators $\X_j$. For $n\geq 3$, we have
\begin{equation*}
\begin{aligned}
\U^\dagger \X_j \U &= \X_j\X_{j+1}, &&1\leq j\leq n-2, \\
\U^\dagger \X_{n-1}\U &= \X_1\X_{n-1}\X_n, \\ \U^\dagger \X_n\U &= \X_1\X_n.
\end{aligned}
\label{eq:X-conjugation-ring}
\end{equation*}

Using the binary correspondence $\X_j\leftrightarrow e_j$ and the row-vector convention introduced above, these relations become
\begin{equation*}
\begin{aligned}
e_jB_n &= e_j+e_{j+1}, &&1\leq j\leq n-2, \\ e_{n-1}B_n
&= e_1+e_{n-1}+e_n, \\ e_nB_n &= e_1+e_n.
\end{aligned}
\end{equation*}

Thus,
\begin{equation}
B_n =
\begin{pmatrix}
1 & 1 & 0 & \cdots & 0 & 0 & 0\\
0 & 1 & 1 & \ddots & \vdots & \vdots & \vdots\\
0 & 0 & 1 & \ddots & 0 & 0 & 0\\
\vdots & \ddots & \ddots & \ddots & 1 & 0 & 0\\
0 & \cdots & 0 & 0 & 1 & 1 & 0\\
1 & 0 & \cdots & 0 & 0 & 1 & 1\\
1 & 0 & \cdots & 0 & 0 & 0 & 1
\end{pmatrix}.
\label{eq:Bn}
\end{equation}

The previous derivation assumes $n\geq 3$. The case $n=2$ must be treated separately, since the qubit indexed by $n-1$ coincides with the first qubit. In this case,
\begin{align*}
\U^\dagger \X_1\U &= \X_2,\\ \U^\dagger \X_2\U &= \X_1\X_2.
\end{align*}
Hence,
\begin{equation*}
e_1B_2=e_2, \qquad e_2B_2=e_1+e_2,
\end{equation*}
and therefore
\begin{equation}
B_2 =
\begin{pmatrix}
0&1\\
1&1
\end{pmatrix}.
\end{equation}

\subsection{Order of \texorpdfstring{$B_n$}{Bn}}

The order of a matrix is governed by its minimal polynomial. Recall that the set of polynomials annihilating $B_n$,
\begin{align*}
I_{B_n} = \left\{P\in\GF_2[x] \;:\; P(B_n)=0 \right\},
\end{align*}
is an ideal of $\GF_2[x]$. Since $\GF_2[x]$ is a principal ideal domain, there exists a unique monic polynomial $\mu_{n}$, called the \emph{minimal polynomial} of $B_n$, such that
\begin{align*}
I_{B_n} = (\mu_{n}) = \mu_{n}\GF_2[x].
\end{align*}

Using the relation $\mu_n(B_n)=0$, powers of $B_n$ can be reduced according to the polynomial relation defined by $\mu_n$. Since $B_n\in\mathrm{GL}(n,\GF_2)$, we have
\begin{equation}
B_n^k=I_n \quad\Longleftrightarrow\quad x^k\equiv 1 \pmod{\mu_n}.
\label{eq:order minimal polynomial}
\end{equation}
Thus, $\ord(B_n)$ is entirely determined by $\mu_n$. It is the order of $\overline x$ in $(\GF_2[x] / (\mu_n))^\times$.

In practice, the minimal polynomial can be difficult to determine directly. It is therefore useful to relate it to the characteristic polynomial.

\begin{Proposition}
\label{prop:charact. pol of Bn}
The characteristic polynomial of $B_n$ is
\begin{align}
\chi_n = (x+1)^n+x.
\end{align}
\end{Proposition}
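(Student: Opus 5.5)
The plan is to compute $\chi_n(x)=\det(xI_n+B_n)$ directly over $\GF_2$, where signs play no role, by a single cofactor expansion along the first column. First I would write $B_n=I_n+N_n$, reading off from Eq.~\eqref{eq:Bn} that $N_n$ has ones exactly at positions $(j,j+1)$ for $1\le j\le n-1$, together with $(n-1,1)$ and $(n,1)$. Setting $y=x+1$, this gives
\begin{equation*}
\chi_n(x)=\det\big(yI_n+N_n\big).
\end{equation*}
The matrix $yI_n+N_n$ is upper bidiagonal, with $y$ on the diagonal and $1$ on the superdiagonal, except for two extra ones in the first column, in rows $n-1$ and $n$. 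The goal is therefore to show $\det(yI_n+N_n)=y^n+y+1$, since then $\chi_n=(x+1)^n+(x+1)+1=(x+1)^n+x$.

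For $n\ge 3$ the first column has exactly three nonzero entries, in rows $1$, $n-1$ and $n$, so the expansion has three terms, and I would evaluate each minor by showing it is triangular or block-triangular.
\begin{itemize}
\item Deleting row $1$ and column $1$ leaves an upper bidiagonal matrix with diagonal $y$, whose determinant is $y^{n-1}$. This contributes $y\cdot y^{n-1}=y^n$.
\item Deleting row $n$ and column $1$ leaves rows $1,\dots,n-1$ against columns $2,\dots,n$. The superdiagonal ones now lie on the diagonal and the $y$'s lie below it, so the minor is lower triangular with unit diagonal and has determinant $1$.
\item Deleting row $n-1$ and column $1$ leaves rows $1,\dots,n-2,n$ against columns $2,\dots,n$. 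Rows $1,\dots,n-2$ restricted to columns $2,\dots,n-1$ form a lower bidiagonal block with unit diagonal, and these rows vanish in column $n$. The last row, row $n$, has its only remaining nonzero entry $y$ in column $n$, because its first-column entry was deleted. The minor is therefore block lower triangular with determinant $y$.
\end{itemize}
Summing the three contributions gives $y^n+1+y$, as required.

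The main obstacle is purely bookkeeping. One must check carefully that, in each minor, the index shift really produces a triangular structure, and that for $n=3$ the rows $n-1=2$ and $1$ do not collide in a way that breaks the argument. For $n=3$ the three rows $1,2,3$ are still distinct, so I would simply verify the $3\times3$ determinant by hand as a sanity check.

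The case $n=2$ must be handled separately, as in the derivation of $B_2$. There $\det(xI_2+B_2)=x(x+1)+1=x^2+x+1$, which equals $(x+1)^2+x$ over $\GF_2$. This completes the plan.
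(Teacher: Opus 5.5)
Your proposal is correct and is essentially the paper's proof: a cofactor expansion of $\det(xI_n+B_n)$ along the first column, yielding the three contributions $(x+1)^n$, $x+1$ and $1$. Your explicit triangular-minor bookkeeping and your separate check of $n=2$ add completeness, since the paper's proof does not treat $n=2$ and Eq.~\eqref{eq:Bn} only holds for $n\ge 3$.
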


\begin{proof}
The characteristic of $B_n$ is given by (recall that $+1=-1$ in $\mathbb F_2$)
\[
\begin{array}{rcl}
\chi_n &=&
\begin{vmatrix}
x + 1 & 1 & 0 & \cdots & 0 & 0 & 0\\
0 & x + 1 & 1 & \ddots & \vdots & \vdots & \vdots\\
0 & 0 & x + 1 & \ddots & 0 & 0 & 0\\
\vdots & \ddots & \ddots & \ddots & 1 & 0 & 0\\
0 & \cdots & 0 & 0 & x + 1 & 1 & 0\\
1 & 0 & \cdots & 0 & 0 & x + 1 & 1\\
1 & 0 & \cdots & 0 & 0 & 0 & x + 1
\end{vmatrix} \\
\noalign{\vspace{1em}}
&=& (x + 1)\cdot (x + 1)^{n - 1} + 1 \cdot (x + 1) + 1 \cdot 1  \\
&=& (x + 1)^n + x
\end{array}
\]
(We have expanded the determinant along the first column.)
\end{proof}

Let us write the factorization into irreducible polynomials as:
\[
\chi_n = f_1^{m_1} \dots f_s^{m_s}.
\]
We have
\[
f_1 \dots f_s \mid \mu_n \mid f_1^{m_1}\dots f_s^{m_s} = \chi_n.
\]
We will see that in our case, the situation is very simple: all the multiplicities $m_i$ are $1$, implying the $\mu_n = \chi_n$.

\begin{Proposition}
\label{prop:charact. pol = min pol}
The minimal and characteristic polynomials of $B_n$ coincide:
\begin{equation}
\mu_n = \chi_n = (x+1)^n+x = f_1 \dots f_s,
\end{equation}
where the $f_i$ are distinct irreducible polynomials over $\mathbb F_2$.
\end{Proposition}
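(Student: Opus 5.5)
The plan is to show that $\chi_n$ is squarefree over $\GF_2$. Once that is known, the equality $\mu_n=\chi_n$ follows from the divisibility chain stated just before the Proposition: $f_1\cdots f_s \mid \mu_n \mid \chi_n$. If all multiplicities $m_i$ equal $1$, the two outer terms coincide and force $\mu_n=\chi_n=f_1\cdots f_s$ with the $f_i$ distinct irreducibles.

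To prove squarefreeness, I will use the standard derivative criterion. If an irreducible $f$ satisfies $f^2\mid \chi_n$, then $f\mid \chi_n$ and $f\mid \chi_n'$. So it suffices to show that $\chi_n$ and its formal derivative have no common irreducible factor. Using Proposition~\ref{prop:charact. pol of Bn}, $\chi_n=(x+1)^n+x$, so over $\GF_2$
\begin{equation*}
\chi_n' = n(x+1)^{n-1}+1 .
\end{equation*}
I will split according to the parity of $n$.

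\textbf{$n$ even.} Here $\chi_n'=1$, which is trivially coprime to $\chi_n$.

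\textbf{$n$ odd.} Here $\chi_n'=(x+1)^{n-1}+1$. Suppose an irreducible $f$ divides both $\chi_n$ and $\chi_n'$. Reducing modulo $f$ gives $(x+1)^{n-1}\equiv 1$. Multiplying by $x+1$ gives $(x+1)^n\equiv x+1$. Substituting into $\chi_n$ yields
\begin{equation*}
\chi_n\equiv (x+1)+x = 1 \pmod f,
\end{equation*}
which contradicts $f\mid\chi_n$. Hence $\gcd(\chi_n,\chi_n')=1$ in both cases.

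I do not expect a real obstacle. The only point needing care is justifying the derivative criterion over $\GF_2$: from $\chi_n=f^2g$ one gets $\chi_n'=2ff'g+f^2g'=f^2g'$, so $f\mid\chi_n'$. This holds in any characteristic, so no perfectness argument is required. The case $n=2$, where $B_2$ was computed separately, is covered by the same formula: $\chi_2=x^2+x+1$, which is irreducible and agrees with the characteristic polynomial of $B_2$.
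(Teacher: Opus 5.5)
Your proof is correct and follows the paper's route. You reduce to squarefreeness of $\chi_n$ via the divisibility chain and then show $\gcd(\chi_n,\chi_n')=1$ by splitting on the parity of $n$; your reduction modulo an irreducible $f$ in the odd case is the paper's B\'ezout identity $(x+1)\chi_n'+\chi_n=1$ phrased as a congruence. Your remark is also right that only the implication ``coprime to the derivative $\Rightarrow$ squarefree'' is needed, so perfectness of $\GF_2$ plays no role.
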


\begin{proof}
As explained above, it is sufficient to show that $\chi_n$ is square-free, or equivalently (since we work over a finite, hence perfect field):
\begin{align*}
\gcd(\chi_n,\chi_n') = 1.
\end{align*}
We have $\chi_n' = n(x+1)^{n-1}+1$. If $n$ is even then $\chi_n' = 1$ and we are done. So we may assume that $n$ is odd, implying $\chi_n' = (x + 1)^{n - 1} + 1$. We note that
\[
(x + 1) \chi_n' + \chi_n = 1,
\]
implying that $\chi$ and $\chi_n'$ are indeed coprime.
\end{proof}

Consequently, $T_n = \ord(B_n)$ is the order of $\overline x$ in
\[
\left(\GF_2[x]/(\chi_n)\right)^\times
= \left(\GF_2[x]/(f_1\dots f_s) \right)^\times
\simeq \prod_{i=1}^{s} \left(\GF_2[x]/(f_i) \right)^\times
\simeq \prod_{i=1}^{s} \left(\GF_{2^{r_i}} \right)^\times,
\]
where $r_i = \deg f_i$. (We have used the fact that the $f_i$
are distinct irreducible polynomials, and the Chinese Remainder Theorem.)
Under that isomorphism, $\overline x$ corresponds to $(\overline x_i)$,
where $x_i$ is the image of $x$ in $\GF_2[x]/(f_i) = \GF_{2^{r_i}}$. 

We have obtained the following result.

\begin{Proposition}
\label{thm:Order from the irreducible factorization}
Let $t_i$ denote the order of $\overline x$ in $\left(\GF_2[x]/(f_i) \right)^\times \simeq \left(\GF_{2^{r_i}} \right)^\times$.
Then the order $T_n$ of $B_n$ is the lcm of the $t_i$.

Moreover, for each $i$, the order $t_i$ divides $2^{r_i} - 1$, with equality if and only if $f_i$ is primitive (meaning that $\overline x_i$ is a primitive element, i.e. generates $\left(\GF_{2^{r_i}} \right)^\times$).
\end{Proposition}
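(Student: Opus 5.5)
The plan has two parts, matching the two assertions of the proposition: first the lcm formula for $T_n$, then the divisibility statement for each $t_i$.

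For the lcm formula, we combine Eq.~\eqref{eq:order minimal polynomial} with Proposition~\ref{prop:charact. pol = min pol}. These give that $T_n=\ord(B_n)$ is the order of $\overline x$ in $(\GF_2[x]/(\chi_n))^\times$. This is meaningful because $\chi_n(0)=1+0=1\neq 0$, so $x$ is coprime to $\chi_n$ and $\overline x$ is indeed a unit, consistent with $B_n$ being invertible. Since the $f_i$ are pairwise distinct irreducibles, they are pairwise coprime. The Chinese Remainder Theorem then yields the ring isomorphism $\GF_2[x]/(\chi_n)\simeq\prod_i \GF_2[x]/(f_i)$, which sends $\overline x$ to $(\overline x_i)_i$ and restricts to an isomorphism of unit groups. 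In a direct product of groups, $g^k=1$ holds if and only if every component satisfies $g_i^k=1$, that is, if and only if $t_i\mid k$ for all $i$. Hence the least such $k>0$ is $\lcm(t_1,\dots,t_s)$. Each $\overline x_i$ is nonzero: if $f_i=x$ it would divide $\chi_n$, contradicting $\chi_n(0)=1$. So each $t_i$ is well defined.

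For the second assertion, we use that $f_i$ is irreducible of degree $r_i$. Then $\GF_2[x]/(f_i)$ is a field with $2^{r_i}$ elements, so its multiplicative group has order $2^{r_i}-1$. By Lagrange's theorem, $t_i\mid 2^{r_i}-1$. Moreover, $t_i=2^{r_i}-1$ holds exactly when $\overline x_i$ generates this group, which is the definition of $f_i$ being primitive. One may add that the group is cyclic, so generators exist, but this is not needed for the equivalence as stated.

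None of these steps is a real obstacle. The only point requiring care is checking that the CRT isomorphism is compatible with the unit groups and with the image of $\overline x$, together with the nonvanishing of $\chi_n(0)$.
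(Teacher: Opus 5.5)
Your proof is correct and follows the paper's argument. Like the paper, you reduce $\ord(B_n)$ to the order of $\overline x$ modulo the square-free $\chi_n$, split via the Chinese Remainder Theorem into the fields $\GF_2[x]/(f_i)$ to obtain the lcm, and then apply Lagrange's theorem and the definition of primitivity; your explicit check that $\chi_n(0)=1$, so that $\overline x$ and every $\overline x_i$ are units, is a detail the paper leaves implicit behind $B_n\in\mathrm{GL}(n,\GF_2)$.
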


To sum up, the exact conjugation period $T_n$ can be determined by
factoring
\begin{align*}
\chi_n = (x+1)^n+x
\end{align*}
over $\GF_2$, computing the multiplicative order associated with
each irreducible factor, and taking the least common multiple of these
orders.

\subsection{Specific cases for differents number of qubits \texorpdfstring{$n$}{n}}
\label{sec:example}

\subsubsection*{For \texorpdfstring{$n=2$}{n=2} qubits}

For $n=2$, the characteristic polynomial
\begin{align*}
\chi_2 = (x+1)^2+x = x^2+x+1
\end{align*}
is irreducible over $\GF_2$. The residue class $\bar{x}$ is a nontrivial element of $\GF_4^\times$, hence has order $3$.

We can easily describe the action of $\Ad_\U$ on the initial generators $\{\Y_1,\Y_2\}$:
\begin{align*}
\Y_1 \mapsto \Z_1\Y_2 \mapsto \X_1\Y_2 \mapsto \Y_1,
\\
\Y_2 \mapsto \Y_1\X_2 \mapsto \Y_1\Z_2 \mapsto \Y_2.
\end{align*}
The complete set of generators of the DLA is therefore
\begin{align*} 
\G_{\infty} = \left\{ \Y_1,\, \Y_2,\, \Z_1\Y_2,\, \Y_1\X_2,\, \X_1\Y_2,\, \Y_1\Z_2 \right\}. 
\end{align*}

The linear span of $i\G_{\infty}$ is already stable by the Lie bracket, so the DLA $\mathfrak g$ is $6$-dimensional. Actually it is $\mathfrak{so}(4)$.

\subsubsection*{For \texorpdfstring{$n=3$}{n=3} qubits}

For $n=3$, the characteristic polynomial
\begin{align*}
\chi_3 = (x+1)^3+x = x^3+x^2+1
\end{align*}
is irreducible over $\GF_2$. The residue class $\bar{x}$ is a nontrivial element of $\GF_8^\times$, hence has order $7$.
Here are the explicit $\Ad_\U$-orbits of the initial generators, which are all of length $7$: 
\[
\begin{array}{r*{14}{@{\,}c@{\,}}l}
\Y_1 &\mapsto& \Y_1\X_2\Z_3 
&\mapsto& \Z_2\Y_3 
&\mapsto& \X_1\Y_3
&\mapsto& \Z_1\Y_2\X_3
&\mapsto& \Y_2
&\mapsto& -\Y_1\Y_2\Y_3
&\mapsto& \Y_1
,\\
\Y_2 &\mapsto& -\Y_1\Y_2\Y_3 
&\mapsto& \Y_1 
&\mapsto& \Y_1\X_2\Z_3
&\mapsto& \Z_2\Y_3
&\mapsto& \X_1\Y_3
&\mapsto& \Z_1\Y_2\X_3
&\mapsto& \Y_2
,\\
\Y_3 &\mapsto& \Y_1\Z_2\X_3 
&\mapsto& \Y_2\X_3 
&\mapsto& \Y_1\X_2\Z_3
&\mapsto& \Y_1\X_2\X_3
&\mapsto& \Y_1\Z_3
&\mapsto& \X_1\Y_2\Z_3
&\mapsto& \Y_3
.
\end{array}
\]

The full DLA $\mathfrak g$ is $\mathfrak{so}(8)$, which is $28$-dimensional.

\subsubsection*{Up to \texorpdfstring{$n=8$}{n=3} qubits}

Table \ref{tab:orders} presents the values of $T_n$ from $n=2$ to $n=8$ qubits, showing different cases where the characteristic polynomial may or may not be irreducible, and the irreducible factors may or may not be primitive. Besides, we indicate the Lie closure depth $L_{\mathrm{Lie}}$ which is actually much smaller; actually it seems (and it should not be too difficult to prove) that $L_{\mathrm{Lie}}$ is $4$ if $n$ is a multiple of $3$, and $2$ otherwise. Finally, we have checked up to $n=8$ that the DLA $\mathfrak g_n$ is $\mathfrak{so}(2^n)$, of dimension $2^{n-1}(2^n - 1)$. It is very likely to be true for all $n$.

\begin{table}[ht]
\caption{\textbf{Orders of the conjugation action and Lie algebra properties as a function of the number of qubits.} For each number of qubits $n$, the table gives the factorization of the characteristic polynomial $\chi_n$ over $\mathbb{F}_2$, the order $T_n$ of the conjugation action, the minimum number of layers $L_{\mathrm{Lie}}$ required to generate the full dynamical Lie algebra $\mathfrak{g}_n$, and its dimension $\dim\mathfrak{g}_n$. Here, $\mathrm{lcm}$ denotes the least common multiple.}
\label{tab:orders}
\centering
\begin{tabular}{|c|c|c|c|c|}
\hline
$n$ & Factorization of $\chi_n$ & $T_n$ & $L_{\mathrm{Lie}}$ &
$\dim\mathfrak g_n$\\
\hline
$2$
& $x^2+x+1$
& $3$ 
& 2
& 6
\\[1mm]

$3$
& $x^3+x^2+1$
& $7$ 
& 4
& 28
\\[1mm]

$4$
& $x^4+x+1$
& $15$ 
& 2
& 120
\\[1mm]

$5$
& $(x^2+x+1)(x^3+x+1)$
& $21 = \lcm(3,7)$ 
& 2
& 496
\\[1mm]

$6$
& $x^6+x^4+x^2+x+1$
& $21$ 
& 4
& 2016
\\[1mm]

$7$
& $x^7+x^6+x^5+x^4+x^3+x^2+1$
& $127$ 
& 2
& 8128
\\[1mm]

$8$
& $(x^2+x+1)(x^6+x^5+x^3+x^2+1)$
& $63 = \lcm(3,63)$ 
& 2
& 32640
\\
\hline
\end{tabular}
\end{table}

\subsection{Implications for overparametrization and barren plateaus}

The exponential growth of the the DLA dimension has implications for overparametrization and Barren Plateaus (BPs). Indeed, as discussed earlier, a necessary condition for overparametrization is $KL \geq \dim \g$. Since each layer contains $K=n$ parameters, this gives a lower bound on the required depth,
\begin{equation}
L\geq\frac{2^{n-1}(2^n-1)}{n}.
\end{equation}
This corresponds to at least $3$ layers for $n=2$, $10$ for $n=3$, $30$ for $n=4$ and $4080$ for $n=8$, illustrating how rapidly the required depth increases with the number of qubits.

Moreover, as shown by Ragone \textit{et al.} \cite{Ragone_2024}, the dimenion of the DLA also plays a role in the emergence of BPs. In fact, under the appropriate assumptions, the variance of the cost function is inversely proportional to $\dim \g$. Since $\dim \g = \mathcal{O}(4^n)$, our Ansatz is therefore susceptible to exhibit BPs in the sufficiently deep regime.

\section*{Conclusion and Perspectives}
\label{sec:con}

In this work, we considered parametrized quantum circuits in which trainable gates are interleaved with the same fixed unitary block $\U$. The main issue is that fixed gates cannot be included directly in the generator set used to construct the dynamical Lie algebra, since they are not continuously parameterized. However, they also cannot be ignored, because they transform the initial generators through conjugation. Our goal was therefore to give a description of the effective generators produced by these fixed blocks.

We first rewrote the circuit in terms of successive conjugations of the original trainable generators. We then focused on the case where the fixed part $\U$ is Clifford and the generators are Pauli strings. In this setting, the conjugated generators remain Pauli strings and follow a finite recurrence up to a sign. Using the binary representation of the projective Pauli group, we showed that Clifford conjugation is represented by a symplectic matrix
$M_{\U}$ over $\GF_2$, and that
\begin{equation}
T_{\U}=\ord(M_{\U})
\end{equation}
gives the global conjugation period. We also derived simpler forms of this matrix when $\U$ is composed only of controlled-$X$ gates or only of controlled-$Z$ gates. These results show how the DLA can be determined for parametrized Ans\"atze built from trainable Pauli rotations and a repeated fixed Clifford block. Instead of adding the fixed Clifford gates to the generator set, one has to consider the successive conjugations of the trainable generators by $\U$. Since these conjugations are periodic, only a finite number of effective generators will be obtained, and the DLA can then be generated by taking the Lie closure of this finite set. We applied these results to an Ansatz in which the fixed gates were composed of $\CX$ gates.

As a perspective, it would be interesting to benchmark two circuits with the same parametrized part but different fixed Clifford blocks, one composed of controlled-$\X$ gates and the other of controlled-$\Z$ gates. This would allow us to study how the choice of the entangling block affects the effective generators and the resulting dynamical Lie algebra.

It would also be interesting to extend our approach to circuits with non-Clifford fixed gates.

\printbibliography

@article{g4zk-79l9,
  title = {Erratum: Provably Trainable Rotationally Equivariant Quantum Machine Learning [PRX Quantum 5, 030320 (2024)]},
  author = {West, Maxwell T. and Heredge, Jamie and Sevior, Martin and Usman, Muhammad},
  journal = {PRX Quantum},
  volume = {6},
  issue = {2},
  pages = {020902},
  numpages = {1},
  year = {2025},
  month = {6},
  publisher = {American Physical Society},
  doi = {10.1103/g4zk-79l9},
  url = {https://link.aps.org/doi/10.1103/g4zk-79l9}
}

@article{Dehaene_2003,
   title={Clifford group, stabilizer states, and linear and quadratic operations over GF(2)},
   volume={68},
   ISSN={1094-1622},
   url={http://dx.doi.org/10.1103/PhysRevA.68.042318},
   DOI={10.1103/physreva.68.042318},
   number={4},
   journal={Physical Review A},
   publisher={American Physical Society (APS)},
   author={Dehaene, Jeroen and De Moor, Bart},
   year={2003},
   month=Oct }

@article{Ragone_2024,
   title={A {L}ie algebraic theory of barren plateaus for deep parameterized quantum circuits},
   volume={15},
   ISSN={2041-1723},
   url={http://dx.doi.org/10.1038/s41467-024-49909-3},
   DOI={10.1038/s41467-024-49909-3},
   number={1},
   journal={Nature Communications},
   publisher={Springer Science and Business Media LLC},
   author={Ragone, Michael and Bakalov, Bojko N. and Sauvage, Frédéric and Kemper, Alexander F. and Ortiz Marrero, Carlos and Larocca, Martín and Cerezo, M.},
   year={2024},
   month=aug }

@article{Peruzzo_2014,
   title={A variational eigenvalue solver on a photonic quantum processor},
   volume={5},
   ISSN={2041-1723},
   url={http://dx.doi.org/10.1038/ncomms5213},
   DOI={10.1038/ncomms5213},
   number={1},
   journal={Nature Communications},
   publisher={Springer Science and Business Media LLC},
   author={Peruzzo, Alberto and McClean, Jarrod and Shadbolt, Peter and Yung, Man-Hong and Zhou, Xiao-Qi and Love, Peter J. and Aspuru-Guzik, Alán and O’Brien, Jeremy L.},
   year={2014},
   month=jul }

@article{Cerezo_2021,
   title={Variational quantum algorithms},
   volume={3},
   ISSN={2522-5820},
   url={http://dx.doi.org/10.1038/s42254-021-00348-9},
   DOI={10.1038/s42254-021-00348-9},
   number={9},
   journal={Nature Reviews Physics},
   publisher={Springer Science and Business Media LLC},
   author={Cerezo, M. and Arrasmith, Andrew and Babbush, Ryan and Benjamin, Simon C. and Endo, Suguru and Fujii, Keisuke and McClean, Jarrod R. and Mitarai, Kosuke and Yuan, Xiao and Cincio, Lukasz and Coles, Patrick J.},
   year={2021},
   month=aug, pages={625–644} }

@article{Larocca_2022,
   title={Diagnosing Barren Plateaus with Tools from Quantum Optimal Control},
   volume={6},
   ISSN={2521-327X},
   url={http://dx.doi.org/10.22331/q-2022-09-29-824},
   DOI={10.22331/q-2022-09-29-824},
   journal={Quantum},
   publisher={Verein zur Forderung des Open Access Publizierens in den Quantenwissenschaften},
   author={Larocca, Martin and Czarnik, Piotr and Sharma, Kunal and Muraleedharan, Gopikrishnan and Coles, Patrick J. and Cerezo, M.},
   year={2022},
   month=sep, pages={824} }

@article{Fontana_2024,
   title={Characterizing barren plateaus in quantum ansätze with the adjoint representation},
   volume={15},
   ISSN={2041-1723},
   url={http://dx.doi.org/10.1038/s41467-024-49910-w},
   DOI={10.1038/s41467-024-49910-w},
   number={1},
   journal={Nature Communications},
   publisher={Springer Science and Business Media LLC},
   author={Fontana, Enrico and Herman, Dylan and Chakrabarti, Shouvanik and Kumar, Niraj and Yalovetzky, Romina and Heredge, Jamie and Sureshbabu, Shree Hari and Pistoia, Marco},
   year={2024},
   month=aug }

@article{West_2024,
   title={Provably Trainable Rotationally Equivariant Quantum Machine Learning},
   volume={5},
   ISSN={2691-3399},
   url={http://dx.doi.org/10.1103/PRXQuantum.5.030320},
   DOI={10.1103/prxquantum.5.030320},
   number={3},
   journal={PRX Quantum},
   publisher={American Physical Society (APS)},
   author={West, Maxwell T. and Heredge, Jamie and Sevior, Martin and Usman, Muhammad},
   year={2024},
   month=jul }

@misc{xiao2025commentprovablytrainablerotationally,
      title={{Comment on ``Provably Trainable Rotationally Equivariant Quantum Machine Learning''}}, 
      author={Zhiming Xiao and Ting Li},
      year={2025},
      eprint={2504.16950},
      archivePrefix={arXiv},
      primaryClass={quant-ph},
      url={https://arxiv.org/abs/2504.16950}, 
}

@misc{gottesman1998heisenbergrepresentationquantumcomputers,
      title={The Heisenberg Representation of Quantum Computers}, 
      author={Daniel Gottesman},
      year={1998},
      eprint={quant-ph/9807006},
      archivePrefix={arXiv},
      primaryClass={quant-ph},
      url={https://arxiv.org/abs/quant-ph/9807006}, 
}

@article{Kandala_2017,
   title={Hardware-efficient variational quantum eigensolver for small molecules and quantum magnets},
   volume={549},
   ISSN={1476-4687},
   url={http://dx.doi.org/10.1038/nature23879},
   DOI={10.1038/nature23879},
   number={7671},
   journal={Nature},
   publisher={Springer Science and Business Media LLC},
   author={Kandala, Abhinav and Mezzacapo, Antonio and Temme, Kristan and Takita, Maika and Brink, Markus and Chow, Jerry M. and Gambetta, Jay M.},
   year={2017},
   month=sep, pages={242–246} }

@misc{iwakiri2025universalityhardwareefficientansatzes,
      title={On universality of hardware-efficient ansatzes}, 
      author={Hokuto Iwakiri and Keita Kanno},
      year={2025},
      eprint={2511.03870},
      archivePrefix={arXiv},
      primaryClass={quant-ph},
      url={https://arxiv.org/abs/2511.03870}, 
}

@article{Larocca_2023,
   title={Theory of overparametrization in quantum neural networks},
   volume={3},
   ISSN={2662-8457},
   url={http://dx.doi.org/10.1038/s43588-023-00467-6},
   DOI={10.1038/s43588-023-00467-6},
   number={6},
   journal={Nature Computational Science},
   publisher={Springer Science and Business Media LLC},
   author={Larocca, Martín and Ju, Nathan and García-Martín, Diego and Coles, Patrick J. and Cerezo, Marco},
   year={2023},
   month=jun, pages={542–551} }

@article{Biamonte_2017,
   title={Quantum machine learning},
   volume={549},
   ISSN={1476-4687},
   url={http://dx.doi.org/10.1038/nature23474},
   DOI={10.1038/nature23474},
   number={7671},
   journal={Nature},
   publisher={Springer Science and Business Media LLC},
   author={Biamonte, Jacob and Wittek, Peter and Pancotti, Nicola and Rebentrost, Patrick and Wiebe, Nathan and Lloyd, Seth},
   year={2017},
   month=sep, pages={195–202} }

@article{Hashagen_2018,
   title={Real Randomized Benchmarking},
   volume={2},
   ISSN={2521-327X},
   url={http://dx.doi.org/10.22331/q-2018-08-22-85},
   DOI={10.22331/q-2018-08-22-85},
   journal={Quantum},
   publisher={Verein zur Forderung des Open Access Publizierens in den Quantenwissenschaften},
   author={Hashagen, A. K. and Flammia, S. T. and Gross, D. and Wallman, J. J.},
   year={2018},
   month=Aug, pages={85} }

@article{Whitfield_2011,
   title={Simulation of electronic structure Hamiltonians using quantum computers},
   volume={109},
   ISSN={1362-3028},
   url={http://dx.doi.org/10.1080/00268976.2011.552441},
   DOI={10.1080/00268976.2011.552441},
   number={5},
   journal={Molecular Physics},
   publisher={Informa UK Limited},
   author={Whitfield, James D. and Biamonte, Jacob and Aspuru-Guzik, Alán},
   year={2011},
   month=mar, pages={735–750} }

@article{Garc_a_lvarez_2020,
   title={Efficient simulatability of continuous-variable circuits with large Wigner negativity},
   volume={2},
   ISSN={2643-1564},
   url={http://dx.doi.org/10.1103/PhysRevResearch.2.043322},
   DOI={10.1103/physrevresearch.2.043322},
   number={4},
   journal={Physical Review Research},
   publisher={American Physical Society (APS)},
   author={García-Álvarez, Laura and Calcluth, Cameron and Ferraro, Alessandro and Ferrini, Giulia},
   year={2020},
   month=Dec}

@article{Helsen_2018,
   title={Representations of the multi-qubit Clifford group},
   volume={59},
   ISSN={1089-7658},
   url={http://dx.doi.org/10.1063/1.4997688},
   DOI={10.1063/1.4997688},
   number={7},
   journal={Journal of Mathematical Physics},
   publisher={AIP Publishing},
   author={Helsen, Jonas and Wallman, Joel J. and Wehner, Stephanie},
   year={2018},
   month=Jul }

@article{Hostens_2005,
   title={Stabilizer states and Clifford operations for systems of arbitrary dimensions and modular arithmetic},
   volume={71},
   ISSN={1094-1622},
   url={http://dx.doi.org/10.1103/PhysRevA.71.042315},
   DOI={10.1103/physreva.71.042315},
   number={4},
   journal={Physical Review A},
   publisher={American Physical Society (APS)},
   author={Hostens, Erik and Dehaene, Jeroen and De Moor, Bart},
   year={2005},
   month=Apr }

@article{Bittel_2026,
   title={A complete theory of the Clifford commutant},
   volume={10},
   ISSN={2521-327X},
   url={http://dx.doi.org/10.22331/q-2026-07-22-2171},
   DOI={10.22331/q-2026-07-22-2171},
   journal={Quantum},
   publisher={Verein zur Forderung des Open Access Publizierens in den Quantenwissenschaften},
   author={Bittel, Lennart and Eisert, Jens and Leone, Lorenzo and Mele, Antonio A. and Oliviero, Salvatore F.E.},
   year={2026},
   month=Jul, pages={2171} }

@misc{Mansky2023bsb,
      title={Decomposition Algorithm of an Arbitrary Pauli Exponential through a Quantum Circuit}, 
      author={Maximilian Balthasar Mansky and Victor Ramos Puigvert and Santiago Londoño Castillo and Claudia Linnhoff-Popien},
      year={2023},
      eprint={2305.04807},
      archivePrefix={arXiv},
      primaryClass={quant-ph},
      url={https://arxiv.org/abs/2305.04807}, 
}

\end{document}